\newtheorem{thm}{Theorem}
\newtheorem{pro}{Proposition}
\newtheorem{lem}{Lemma}
\begin{document}
%
\title{Decentralized Minimum-Cost Repair for Distributed Storage Systems}

\author{Majid Gerami,  Ming Xiao, Carlo Fischione, Mikael Skoglund,
\\  ACCESS Linnaeus Centre, Royal Institute of Technology, KTH, Sweden, \\ E-mail: \{gerami, mingx, carlofi, skoglund\}@kth.se\\}
\date{\today}

\maketitle

\begin{abstract}
There have been emerging lots of  applications for distributed storage systems e.g., those in wireless sensor networks or cloud storage. Since storage nodes in wireless sensor networks  have limited battery,
it is valuable to find a repair scheme with optimal transmission costs (e.g., energy). The
optimal-cost repair has been recently investigated in a
centralized way. However a centralized control
mechanism may not be available or is very expensive. For the scenarios, it is
interesting to study optimal-cost repair in a decentralized setup.
We formulate the optimal-cost repair as convex optimization
problems for the network with convex transmission costs. Then we
use primal and dual decomposition approaches to decouple the
problem into subproblems to be solved locally. Thus, each
surviving node, collaborating with other nodes, can minimize its
transmission cost such that the global cost is minimized. We
further study the optimality and convergence of the algorithms.
Finally, we discuss the code construction and determine the field size for
finding feasible network codes in our approaches.
\end{abstract}


%
\IEEEpeerreviewmaketitle

\section{Introduction}
Wireless sensor networks consist of several small devices (e.g., sensors) which measure or detect a physical quantity of interest e.g., temperature, dust, light and so on. The main characteristics of these sensors are on limited battery, low CPU power, limited communication capability and small memory \cite{Kong01}. These nodes are often vulnerable. Thus to make the data reliable over these unreliable node, the data can be encoded and distributed among small storage devices \cite{Kong01}, \cite{Dimk01}, \cite{Dimk02}.
When a storage node fails, to maintain the reliability of
systems, an autonomous algorithm should regenerate a new storing
node. The process is generally known as repair. Repair process
will cause traffic and transmission cost. The repair process with
the aim of minimizing traffic leads to the proposal of optimal bandwidth (traffic)
regenerating codes \cite{Dimk01}. The repair with the objective of
minimizing transmission costs leads to the minimum-repair-cost
regenerating codes in e.g., \cite{Maj01}.

The regenerating code \cite{Dimk01} in a distributed storage
system with $n$ nodes is actually a type of erasure codes by which
any $k$ ($k \leqslant n$) out of $n$ nodes can reconstruct the
original file. This property, called the regenerating code
property (RCP), is desirable since it is optimal in
providing reliability using a given amount of storage. In the repair process, the new
node may not have the same coded symbols as the lost node. However
it preserves the RCP. This type of repair is known as functional
repair. Reference \cite{Dimk01} also models distributed storage
systems and the repair process by an acyclic directed graph,
namely,\textit{ information flow graph}. The graph involves three types of
nodes: a source node, storage nodes, and a data collector.  When a
node fails, surviving nodes send $\gamma$ bits of coded symbols to
the new node. Cut analysis on the information flow graph shows the
fundamental storage-bandwidth tradeoff. In \cite{Wu01}, it is
shown that the tradeoff can be achieved by deterministic/random
linear network codes (\cite{Ho01}). In \cite{Kong01} and \cite{Dimk01}, decentralized
approaches for erasure code construction has been proposed respectively based on fountain code and
random linear network coding.


Reference \cite{Maj01} seeks to minimize repair-cost with the RCP
preserved. Furthermore, surviving node cooperation (SNC) is also
proposed in \cite{Maj01}. That is, a surviving node can combine
the data from other surviving nodes and its own data. The
transmission cost is optimized for linear costs with a central
controlling way. Here we shall study the process of
optimal-cost-repair in a decentralized method. The scenario is
interesting when the central control is difficult or expensive. For instance, a centralized control in distributed storage in wireless sensor networks is difficult or even impossible. To
achieve a decentralized method in minimum-cost repair, we first formulate problems as convex optimization
problems. Then we study decentralized methods for finding an
optimal-cost subgraph decoupled from code construction. For
the purpose, we present two distributed algorithms based on primal
and dual decomposition. With the minimum-cost subgraph, we show
that there exists a code over a finite field to regenerate the new
node properly.

The rest of the paper is organized as follows. We formulate the
minimum-cost repair problem in Section \ref{problemForm}. Then,
Section \ref{DectralizedOpt}  presents primal and dual
decomposition algorithms for finding minimum-cost repair subgraph
in a distributed way. We discuss in Section
\ref{sec:code-construction} the issue of the code construction and
required field sizes.

\section{Problem Formulation} \label{problemForm}
Consider a network with $n$ nodes. There are paths
connecting nodes. We denote the transmission cost from
node $i$ to node $j$ by function $f_{ij}$. We only consider the
convex cost. Thus, if $z_{ij}$ is the number of bits (packets)
transmitting from node $i$ to $j$, $f_{ij}$ is a convex function
of $z_{ij}$. We assume that each node knows the cost of links to
its neighbor in the network. For simplicity, we assume that the
network is delay-free and acyclic. In what follows, we first
present the modified information flow graph to analyze the repair process.

\subsection{Modified Information Flow Graph }\label{ModifiedIFG}

Consider a storage system with the source original file of size
$M$ distributed among $n$ nodes in which each node stores $\alpha$
units and any $k$ out of $n$ nodes can rebuild the original file.
We denote the source file with an $M\times1$ vector
$\underline{s}$. Then, the code on node $i$ can be evaluated by a
matrix
 $\underline{Q}_i=(\underline{q}_i^1,\cdots,\underline{q}_i^{\alpha})$ of size $M\times\alpha$ where each
 column ($\underline{q}_i^j$) represents the code coefficients of  fragment $j$
 on node $i$. The stored data in node $i$ is $\underline{X}_i=\underline{Q}_i^{T}\underline{s}$.
Then we can denote the flow of information (and topology of
networks) in a distributed storage system by a directed acyclic
graph denoted as $G(n,k,\alpha)=(N,A)$, where $N$ is the set of
nodes and $A$ is the set of directed links.

Similar to \cite{Dimk01}, graph $G(n,k,\alpha)$ consists of three
different types of nodes: a source node, storage nodes and data
collector ($DC$). The source node contains the original file which
is going to be distributed among storage nodes; The storage nodes
consists of two kinds of nodes, namely, $in$ and $out$ nodes with
a link of capacity $\alpha$ (the storage size) between them; The
data collector can reconstruct the original file by connecting to
$k$ $out$ nodes. Yet different from \cite{Dimk01}, the modified
flow graph shall reflect the topology of the network. Thus, there
might not exist direct channels (edges in $G$) from a surviving
node to the new node. A storage node may have to forward the data
of other nodes to the new node, depending on the network topology.
When a node fails, all the surviving nodes ($n-1$ nodes) can join
the repair process. An optimization algorithm shall determine the
optimal traffic on the links and hence the number of nodes for
repair. An example of the modified information flow graph for a
distributed storage system of a four-node tandem network is given
in Fig.~\ref{modifiedgraph}, where node $4$ fails and node $5$ is
the new node.
\begin{figure}
 \centering
 \psfrag{s}[][][1.5]{ S }
 \psfrag{a}[][][1.5]{ $\alpha$ }
 \psfrag{i}[][][1.5]{ $\infty$ }
 \psfrag{m1}[][][1.5]{ node 1 }
 \psfrag{m2}[][][1.5]{ node 2 }
 \psfrag{m3}[][][1.5]{ node 3 }
 \psfrag{m4}[][][1.5]{ node 4 }
 \psfrag{m5}[][][1.5]{ node 5 }
 \psfrag{z12}[][][1.5]{ $z_{(12)}$ }
 \psfrag{z23}[][][1.5]{ $z_{(23)}$ }
  \psfrag{z35}[][][1.5]{ $z_{(35)}$ }

 \resizebox{8cm}{!}{\epsfbox{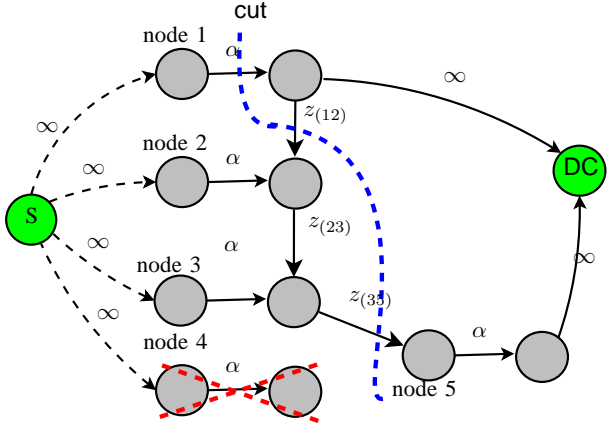}}
\caption{Modified information flow graph for a four-node tandem
network. There are directed channels connecting node 1 to node 2,
node 2 to node 3, and node 3 to node 4, respectively. Node 4 fails
and node 5 is the new node.}
 \label{modifiedgraph}
\end{figure}

For analysis, we use a column vector to denote the number of
fragments transmitted on the links of the network. The vector is
termed as \emph{subgraph}
($\underline{z}=[z_{(ij)}]_{\mid_{(ij)\in A}}$). For
a given network, our objective is to minimize the cost
($\sigma_c)$ during the repair process. With the subgraph
$\underline{z}=[z_{(ij)}]_{\mid_{(ij)\in A}}$,
 and cost function $f_{ij}$, the repair cost is
\begin{equation}
\sigma_c \triangleq \sum_{(ij)\in A} f_{ij}(z_{(ij)}).
\end{equation}

\subsection{Constraint Region} \label{constregion}

In the repair process, it is required that any $k$ nodes can
reconstruct the original file. This property is known as the
regenerating code property (RCP). In the literature, the process
that a node fails and a new node is regenerated is called a stage
of repair. The RCP must be preserved in any stage of repair. Thus,
in the repair process we should have the RCP for the system with
the new node and surviving nodes.  Hence, any cut in the modified
information graph must not be less than $M$, i.e., the original
file size. The requirement is called the cut constraint. Thus, we
should find the minimum $\sigma_c$ under the cut
 constraints. Since there are multiple cuts in the networks, there
will be multiple cut constraints. If we assume $R$ constraints,
the constraints represent the feasible region in our problem. We
call the region polytope $\Psi$, which can be denoted by the
following $R$ linear inequalities,
\begin{equation}
  \sum_{(ij)\in A} h_{(ij)}^r(z_{(ij)})\leq 0 \\ \text{ for } r=1,\cdots, R,
 \end{equation}
where $h_{(ij)}^r(z_{(ij)})$ is an affine function of $z_{(ij)}$
in the $r$-th constraint.

The polytope $\Psi$ is restricted by linear inequalities. Hence,
if $z_{(ij)}$s are real numbers then the constraint region $\Psi$
is convex. We can reasonably assume that $z_{(ij)}$s are real
numbers. Note that the file is measured by bits but it is normally
quite large. Thus we can consider $z_{(ij)}$ real valued.
Following this assumption, $\Psi$ constitutes a convex region.
Since the constraint region is convex, whenever the cost function
is convex, the problem is convex.

\subsection{Convex Optimization} \label{sec:linearOpt}

With the constraint region and objective function, we can
formulate the optimization problem as follows,
\begin{equation}
 \begin{array}{lc}
 \mbox{\text{minimize}} &  \sum_{(ij)\in A} f_{ij}(z_{(ij)}) \\
 \mbox{\text{subject to}} & \sum_{(ij)\in A} h_{(ij)}^r(z_{(ij)})\leq 0 \\ & \text{ for } r=1,\cdots, R,  \\
 & z_{(ij)}\geq $0$.
 \end{array}
 \label{opt-lin_A}
 \end{equation}

Problem (\ref{opt-lin_A}) can be solved centrally as in
\cite{Maj01} if there is a central control mechanism. Consequently
the optimal cost subgraph can be found. Without central control
schemes, we can find the optimal cost subgraph in a decentralized
manner as follows. Corresponding to the minimum cost subgraph for
$\alpha=M/k$,
 we can also find a decentralized coding
scheme (e.g., random linear network codes) for the repair
 satisfying RCP (to be shown in Section
\ref{sec:code-construction}).

\section{Minimum-Cost Subgraph by Decentralized Algorithms} \label{DectralizedOpt}

We first show problem (\ref{opt-lin_A}) can be separated to
$(n-1)$ subproblems. To decouple the problem into subproblems, we
apply primal and dual decomposition methods \cite{OptLec}. These
approaches lead us to distributed algorithms of finding the
optimal-cost repair subgraph. Further we analyze their properties
and evaluate their performance.

\subsection{Primal Decomposition}\label{PrimDecom}
The cost function of problem (\ref{opt-lin_A}) can be decoupled
into $n-1$ parts, each associated to a surviving node. Then every
node solves an optimization problem locally and a master node
coordinates the problem solving (we shall show that this master
problem can be solved in a decentralized way with communication
between nodes). Without loss of generality, we assume node $1$
fails. For decomposition, we rewrite the problem (\ref{opt-lin_A})
as the following form,
\begin{equation}
 \begin{array}{lc}
 \mbox{\text{minimize}} &  \sum_{i=2}^n \sum_{j \mid (ij)\in A} f_{ij}(z_{(ij)}) \\
 \mbox{\text{subject to}} & \sum_{i=2}^n \sum_{j \mid (ij)\in A} h_{(ij)}^r(z_{(ij)})\leq 0 \\ & \text{ for } r=1,\cdots, R,  \\
 & z_{(ij)}\geq $0$.
 \end{array}
 \label{opt-lin}
 \end{equation}

 Then, using primal decomposition with a constraint \cite{OptLec}, each nodes minimizes its transmission cost by,
\begin{equation}
 \begin{array}{lc}
 \mbox{\text{minimize}} &   \sum_{\{j \mid (ij)\in A\}}f_{ij} (z_{(ij)})  \\
 \mbox{\text{subject to}} &   \sum_{\{j \mid (ij)\in A\}} h_{(ij)}^r(z_{(ij)})\leq t_i^r\\&\text{ for } r=1,\cdots,R.\\
 & z_{(ij)}\geq $0$.
 \end{array}
 \label{sub1}
 \end{equation}


 Finally the following master problem iteratively update parameters: $ t_2^1,\cdots,t_2^R, t_3^1, \cdots, t_i^r, \cdots, t_n^R $
 \begin{eqnarray}
  \text{minimize}    & \phi  =  \phi_2(t_2^1,t_2^2,\cdots,t_2^R )+ \nonumber \\ & \cdots  +\phi_n(t_n^1,t_n^2,\cdots,t_n^R),\nonumber \\
  \text{subject to} & t_2^r+...+t_3^r+\cdots+t_{n}^r=0,
   \label{masterProblem}
 \end{eqnarray}
 where for each node, $\phi_i(t_i^1,t_i^2,\cdots,t_i^R )$ is calculated using the Lagrange dual function, associating $\lambda_i^1,\cdots,\lambda_i^R$ as Lagrangian variables of $R$ inequality constraints in subproblem $i$, as
 \begin{eqnarray}
 \phi_i(t_i^1,t_i^2,\cdots,t_i^R )=  \sup_{\lambda_i^1,\cdots,\lambda_i^R} \inf_{z_{(ij)} \mid (ij) \in A}  f_i(z_{(ij)})
 \nonumber \\ -\lambda_i^1 (h_i^1(z_{(ij)})-t_i^1)-\cdots-\lambda_i^R (h_i^R(z_{(ij)})-t_i^R).
  \label{phi}
 \end{eqnarray}
 We can relax the constraint in (\ref{masterProblem}) by setting $t_{n}^r=-(t_2^r+...\\  t_3^r+\cdots+t_{(n-1)}^r)$ in subproblem $n$. Thus, the gradient of function $\phi( t_2^1,\cdots,t_2^R, t_2^1, \cdots, t_i^r, \cdots, t_{n-1}^R)$ in (\ref{phi}) is
  \begin{equation}
 \Delta_p=(\lambda_2^1-\lambda_n^1,\cdots,\lambda_2^R-\\
 \lambda_n^R,\cdots,\lambda_{(n-1)}^R- \\
 \lambda_n^R).
  \end{equation}
  Therefore, the iterative algorithm is

\texttt{Algorithm 1:Primal iterative algorithm}

 \textbf{Repeat:}
 \begin{enumerate}
    \item Every node  solves a subproblem\\
        Node $i$, for $2\leq i \leq n$, solves the subproblem (\ref{sub1}), finding $z_{(ij)\mid(ij) \in A }$ and $(\lambda_i^1,\cdots,\lambda_i^R) $.\\
    \item Update vector $\underline{t}=(t_2^1,\cdots,t_2^R, t_2^1, \cdots, t_i^r, \cdots, t_{n-1}^R)$\\
        $\underline{t}:=\underline{t}-\alpha_k\Delta_p,$ where
        $\alpha_k$ is the iteration step length.

 \end{enumerate}

 \textbf{Until:} The stopping criterion (as follows) is satisfied.

The algorithm can be stopped after passing $T$ (pre-defined)
iterations for delay sensitive conditions or after  achieving certain
level of accuracy (e.g.,
$\|\sigma_c(k)-\sigma_c(k-1)\|<\varepsilon$, where $\varepsilon$
is small and positive). The properties of Algorithm 1 are
discussed as follows.

\subsubsection{Optimality}
We know problem (\ref{opt-lin}) has feasible solutions
 (by e.g., simply assigning $z_{(ij)}=M$ all the cut constraints are satisfied). According to \cite{OptLec}, problem (\ref{opt-lin})
and the decomposed problem are equivalent. Hence, as long as the
convergence of the decomposed problem is proved, it converges to
the optimal solution.

\subsubsection{Convergence}

\begin{pro} For the decomposed problems (\ref{sub1}), (\ref{masterProblem}),  Algorithm 1 converges to the optimal solutions.
\end{pro}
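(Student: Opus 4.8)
The plan is to recognize Algorithm 1 as a projected subgradient method applied to the convex master problem (\ref{masterProblem}), and then invoke the classical convergence theory for such methods. First I would establish that each local value function $\phi_i(t_i^1,\dots,t_i^R)$ defined in (\ref{phi}) is convex in the allocation variables $t_i^r$. This is the standard perturbation result: since subproblem (\ref{sub1}) is a convex program in the $z_{(ij)}$ and the $t_i^r$ enter only as right-hand sides of the affine constraints, the sup--inf in (\ref{phi}) is a pointwise supremum of functions that are affine in $t_i^r$, hence convex. Summing over $i$ shows the master objective $\phi=\sum_{i=2}^n\phi_i$ is convex, and eliminating $t_n^r$ via $t_n^r=-\sum_{i=2}^{n-1}t_i^r$ preserves convexity because it is an affine substitution; the leftover constraints define an affine (hence convex) feasible set.

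The key step is to certify that the vector $\Delta_p$ used in the update is a genuine subgradient of this reduced master objective. Here I would appeal to the Lagrangian sensitivity (envelope) theorem: the optimal dual variable $\lambda_i^r$ of the $r$-th constraint in subproblem (\ref{sub1}) is a subgradient of $\phi_i$ with respect to $t_i^r$, i.e. $\lambda_i^r\in\partial_{t_i^r}\phi_i$ at allocations where the subproblem is solvable with finite value (the sign matching the convention adopted in (\ref{phi})). Differentiating through the affine elimination $t_n^r=-\sum_{i=2}^{n-1}t_i^r$ then yields $\partial_{t_i^r}\phi\ni\lambda_i^r-\lambda_n^r$, so $\Delta_p$ as written is exactly a subgradient of $\phi$ and the step $\underline{t}:=\underline{t}-\alpha_k\Delta_p$ is precisely a subgradient step for a convex function over an affine set.

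With these facts in place, convergence follows from the classical analysis of the subgradient method: assuming the step sizes are chosen diminishing and non-summable (e.g. $\alpha_k\to 0$, $\sum_k\alpha_k=\infty$, $\sum_k\alpha_k^2<\infty$) and the subgradients $\Delta_p$ remain bounded on the relevant domain, the best iterate value converges to the optimum $\phi^\star$ of the master problem. Because problem (\ref{opt-lin}) is feasible (exhibited in the Optimality discussion by the assignment $z_{(ij)}=M$) and, by the cited equivalence, the decomposed formulation has the same optimal value, convergence to $\phi^\star$ is convergence to the optimal repair cost $\sigma_c$.

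The main obstacle I anticipate is the sensitivity/subgradient step together with the feasibility bookkeeping inherent to primal decomposition. Establishing $\lambda_i^r\in\partial_{t_i^r}\phi_i$ cleanly requires either differentiability of $\phi_i$ (which would follow from uniqueness of the subproblem minimizer, e.g. under strict convexity of the $f_{ij}$, which is not assumed here) or a direct subgradient inequality derived from weak duality. Moreover, for an arbitrary allocation $\underline{t}$ a given subproblem (\ref{sub1}) may be infeasible, so each $\phi_i$ is finite only on a subset of allocations; a rigorous argument must either show the iterates stay in the common feasibility region or augment the scheme with a feasibility-restoration (phase-I) step. Finally, since subgradient methods are not descent methods, the error-based stopping rule $\|\sigma_c(k)-\sigma_c(k-1)\|<\varepsilon$ is a heuristic rather than an optimality certificate, a point I would flag without letting it affect the asymptotic convergence claim.
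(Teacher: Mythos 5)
Your proposal is correct and follows essentially the same route as the paper, which offers no explicit argument of its own but simply defers to the decomposition notes \cite{OptLec}: there, as in your sketch, convergence rests on convexity of the perturbation functions $\phi_i$, the identification of the optimal multipliers $\lambda_i^r-\lambda_n^r$ as subgradients of the reduced master objective, and the classical diminishing-step-size subgradient analysis. You in fact supply more than the paper does, including the honest caveats (possible infeasibility of subproblems at arbitrary allocations $\underline{t}$, boundedness of subgradients, and the heuristic nature of the stopping rule) that the paper's one-line proof glosses over.
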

Proof: The proof is similar to that in \cite{OptLec}.

\subsubsection{Implementing Algorithm 1 in a decentralized way}
It seems that Algorithm 1 is still not fully decentralized since a
node is needed to solve the master problem. However, by checking
the master updating equation, we see that the equation can be
broken into $n-1$ parts if nodes can communicate to each other.
That is,
\begin{equation}
\Delta_p=(\Delta_{p2},\cdots,\Delta_{pi},\cdots,\Delta_{pn}),
\end{equation}
where for node $i$, $0 \leq i \leq (n-1)$,
\begin{equation}
 \Delta_{pi}=(\lambda_i^1-\lambda_n^1,\lambda_i^2-\lambda_n^2,\cdots,\lambda_i^R-\lambda_n^R).
\end{equation}
Consequently, at the end of each iteration,  node $i$, receives
$(\lambda_n^1,\cdots,\lambda_n^R)$ and updates its master equation
as,
 \begin{equation}
 t_i^r=t_i^r-\alpha_k \Delta_{pi}(r), \text{ for } r=1,\cdots,R.
 \end{equation}
 Node $i$ also sends the updated results to node $n$. Since we assume there exists a path between any
 pair of nodes, nodes can thus communicate and update their master equations.

 \subsection{Dual Decomposition}\label{DualDecom}
 For dual decomposition, we can compute the dual function of the optimization problem (\ref{opt-lin}), and then
 decouple the problem into $(n-1)$ subproblems as follows
\begin{eqnarray*}
&&g(\underline{\lambda},\underline{z})  = \\
&&\sum_{i=2}^{n} \sum_{\{j \mid (ij)\in A\}} f_{ij}(z_{(ij)}) -  \lambda^1( \sum_{i=2}^{n} \sum_{\{j \mid (ij)\in A\}} h_{(ij)}^1(z_{(ij)}))\\
&-& \cdots  - \lambda^R(\sum_{i=2}^{n}  \sum_{\{j \mid (ij)\in A\}} h_{(ij)}^R(z_{(ij)})) \\
&=&\sum_{i=2}^{n} (\sum_{\{j \mid (ij)\in A\}}c_{(ij)} z_{(ij)} - \sum_{r=1}^R \lambda^r \sum_{\{j \mid (ij)\in A\}} h_{(ij)}^r(z_{(ij)})),
   \label{dualf}
 \end{eqnarray*}
where $\lambda^1,\cdots,\lambda^R)$ are associated Lagrangian variables of $R$ inequalities in problem (\ref{opt-lin_A}).
Therefore, the optimization problem  can be solved distributed by
$(n-1)$ surviving nodes, where node $i$, $2 \leq i \leq n$, solves
the following problem
\begin{equation}
 \begin{array}{lc}
 g(\underline{\lambda})=\min_{z_{ij} \mid (ij)\in A} &   \sum_{\{j \mid (ij)\in A\}}f_{(ij)} (z_{(ij)})-\sum_{r=1}^R \\ & \lambda^r\sum_{\{j \mid (ij)\in A\}} h_{(ij)}^r(z_{(ij)}))
  \end{array}.
 \label{dual_sub}
 \end{equation}
Vector $\underline{\lambda}=(\lambda^1,\cdots,\lambda^R)$ is updated after each iteration in
order to minimize the duality gap by
 \begin{equation}
 \begin{array}{lc}
 \max_{\underline{\lambda}}   g(\underline{\lambda}).
 \end{array}
 \label{dual-master}
 \end{equation}
Since the gradient of $q(\underline{\lambda})$ with respect to the
variable $\lambda^r$ is:
\begin{equation}
 g'_r=\frac{\partial g}{\partial \lambda^r}=-\sum_{i=2}^{n} \sum_{\{j \mid (ij)\in A\}} h_{(ij)}^r(z_{(ij)}),
 \label{gradient_dual}
 \end{equation}
the iterative algorithm is

\texttt{Algorithm 2:Dual iterative algorithm}

 \textbf{Repeat:}
  \begin{enumerate}
    \item Every node  solves a minimization problem
        (\ref{dual_sub}), resulting in $z_{(ij)\mid(ij) \in A
        }$.
    \item Update vector $\underline{\lambda}=(\lambda^1,\cdots,\lambda^R)$\\
        $\lambda^r:=\lambda^r+\alpha_k g'_r,$ where $\alpha_k$
        is iteration step length.
 \end{enumerate}

\textbf{Until:} The stopping criterion (as Algorithm 1) is
satisfied.

 We discuss the properties of Algorithm 2 as follows,

 \subsubsection{Optimality}
 For a convex cost function,
 since the constraints in problem  (\ref{opt-lin}) are  non-strict linear inequalities, then the refined Slater condition
 is satisfied \cite{Optbook}. Therefore, strong duality holds for any convex cost in the problem (\ref{opt-lin}).

\subsubsection{Convergence}
Since Algorithm 2 uses a gradient method, it is straightforward to
show the convergence \cite{Optbook}.
\subsubsection{Implementing Algorithm 2 in a decentralized way}
Similar to Algorithm 1, the update equation can be decoupled to $(n-1)$ parts.
\subsection{Numerical results}
For illustration, we apply the decentralized algorithms for a
$4$-node tandem network in Fig. \ref{modifiedgraph} and a
$2\times3$ grid networks in Fig. \ref{grid-NC}. Then, we
numerically compare their convergence behavior. First, we use the
distributed algorithms on a repair process of the distributed
storage system in Fig. \ref{modifiedgraph}. Consider a source file
of size $M=4$ packets is distributed among $4$ nodes such that any
$k=2$ nodes can recover the original file. Assume transmission
between neighboring nodes leads to one unit cost
($f_{ij}(z_{(ij)})=z_{(ij)}$). If node $4$ fails, the optimization
problem is formulated as follows,
\begin{equation} \begin{array}{lc}
\mbox{\text{minimize}} & f(\underline{z})=z_{(12)}+z_{(23)}+ z_{(35)}  \\
\mbox{\text{subject to}} &
\begin{cases}
 z_{(35)} &\geq 2\\
 z_{(23)} &\geq 2\\
 z_{(12)}+ z_{(35)} & \geq 2.\\
\end{cases}
\end{array}.\end{equation}
If the problem can be solved centrally, the optimal approach can
regenerate the new node with $4$ units of transmission costs as in
\cite{Maj01}. Fig. \ref{distexample} compares the result of
distributed algorithms by primal and dual decomposition when
$\alpha_k=0.5/\sqrt{k}$. We can see that the primal approach has very low
convergence speed. The dual algorithm converges very fast to the
optimal value (of the centralized approach). However, the
convergence property may vary for different networks.  Consider
the example in Fig. \ref{grid-NC}. We assume $M=8$ packets are
distributed among $6$ nodes in the grid network such that any $4$
nodes can reconstruct the original file. As shown in Fig.
\ref{distexample2}, the dual algorithm converges slowly to the
optimal value of the centralized approach. Primal decomposition
has faster convergence in this network comparing to the dual algorithm. This difference might stem from the difference in their  network structure.


\begin{figure}
 \centering
 \psfrag{a}[][][1.5]{ $\alpha$ }
 \psfrag{m1}[][][2]{ node 1 }
 \psfrag{m2}[][][2]{ node 2 }
 \psfrag{m3}[][][2]{ node 3 }
 \psfrag{m4}[][][2]{ node 4 }
 \psfrag{m5}[][][2]{ node 5 }
 \psfrag{m6}[][][2]{ node 6 }
 \psfrag{x1}[][][2]{ $x_1$ }
 \psfrag{x2}[][][2]{ $x_2$ }
 \psfrag{x3}[][][2]{ $x_3$ }
 \psfrag{x4}[][][2]{ $x_4$ }
 \psfrag{x5}[][][2]{ $x_5$ }
 \psfrag{x6}[][][2]{ $x_6$ }
 \psfrag{x7}[][][2]{ $x_7$ }
 \psfrag{x8}[][][2]{ $x_8$ }
 \psfrag{y1}[][][2]{ $y_1$ }
 \psfrag{y2}[][][2]{ $y_2$ }
 \psfrag{y3}[][][2]{ $y_3$ }
 \psfrag{y4}[][][2]{ $y_4$ }
 \psfrag{p1}[][][2]{ $p_1$ }
 \psfrag{p2}[][][2]{ $p_2$ }
 \psfrag{p3}[][][2]{ $p_3$ }
 \psfrag{p4}[][][2]{ $p_4$ }
 \psfrag{p5}[][][2]{ $p_5$ }
 \psfrag{p6}[][][2]{ $p_6$ }
 \psfrag{p7}[][][2]{ $p_7$ }
 \resizebox{6cm}{!}{\epsfbox{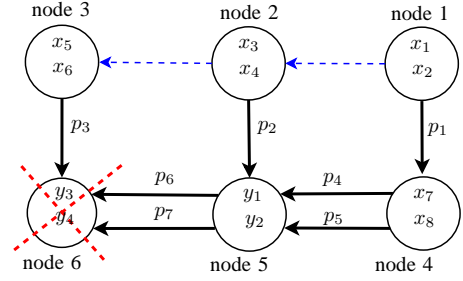}}
\caption{Optimization repair in the $2\times 3$  grid network. Each solid line represents transmission of one packet.
Dashed lines show available links which are not used in the repair
process.}
 \label{grid-NC}
\end{figure}

\begin{figure}
 \centering
 \psfrag{ylabel}[][][2]{ repair-cost($\sigma_c$) }
 \psfrag{xlabel}[][][2]{ iteration ($k$) }
 \psfrag{title}[][][2]{ Decentralized algorithms in 4-nodes tandem distributed system }
 \psfrag{data1data1data1data1}[][][1.5]{Primal decomposition }
 \psfrag{data2data2data2data2}[][][1.5]{Dual decomposition}
 \psfrag{data3data3data3data3}[][][1.5]{Optimal-cost repair }
 \psfrag{title}[][][1.5]{ Convergence of primal and dual algorithms in 4-node tandem network}
 \resizebox{9cm}{!}{\epsfbox{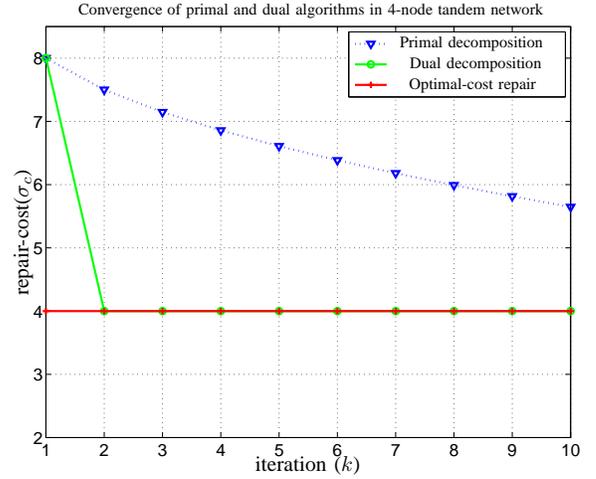}}
\caption{Distributed algorithms for finding optimal cost repair in 4-nodes tandem network. $\alpha_k=0.5/\sqrt{k}$ }
 \label{distexample}
\end{figure}

\begin{figure}
 \centering
 \psfrag{ylabel}[][][2]{ repair-cost($\sigma_c$) }
 \psfrag{xlabel}[][][2]{ iteration ($k$) }
  \psfrag{title}[][][1.5]{ Convergence of primal and dual algorithms in $2\times 3-$grid network}
\psfrag{data1data1data1data1}[][][1.5]{Primal decomposition }
 \psfrag{data2data2data2data2}[][][1.5]{Dual decomposition}
 \psfrag{data3data3data3data3}[][][1.5]{Optimal-cost repair }
 \resizebox{9cm}{!}{\epsfbox{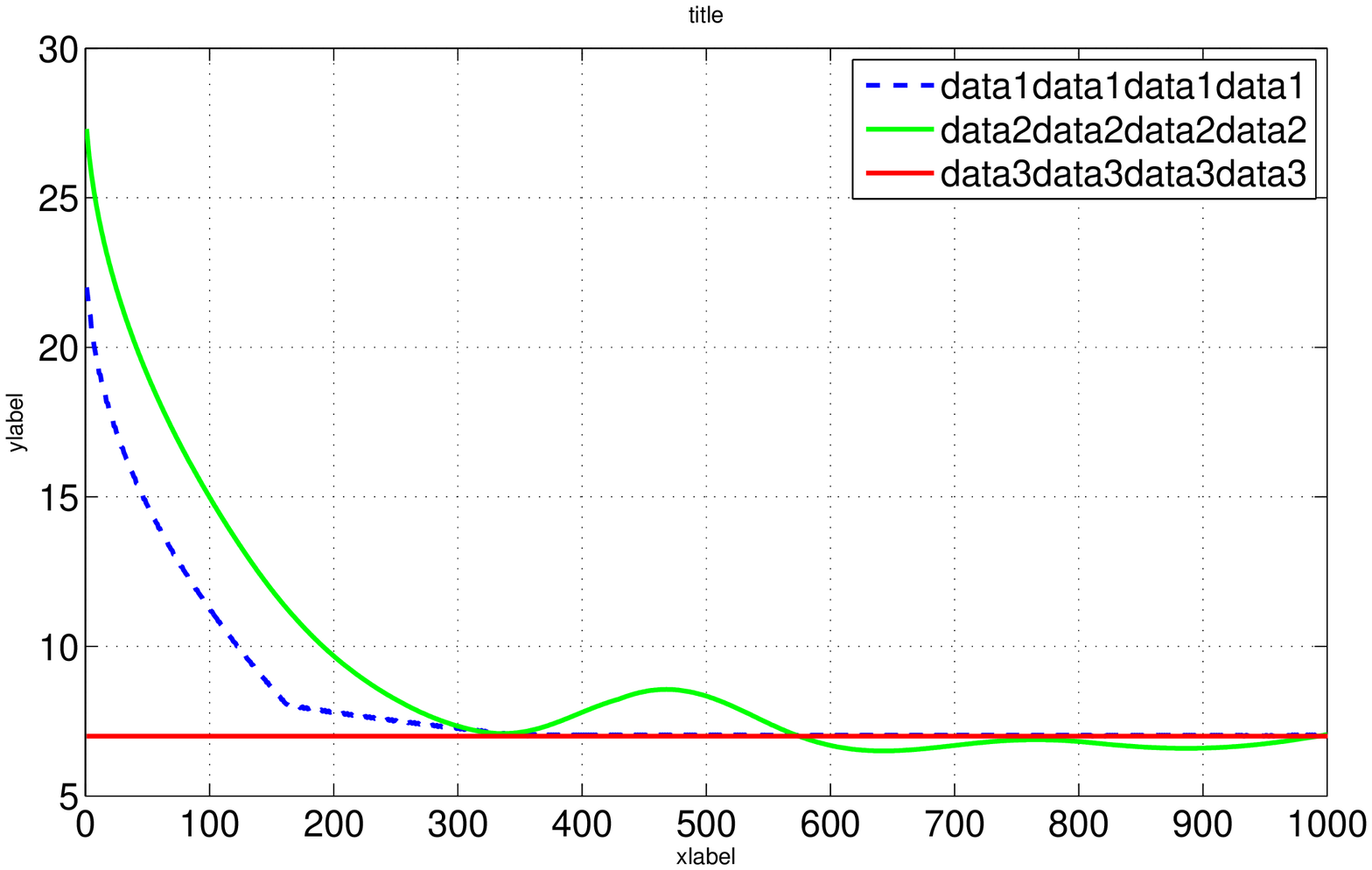}}
\caption{Distributed algorithms for finding optimal-coat repair in $2\times3$ grid network. $\alpha_k=0.5/\sqrt{k}$.}
 \label{distexample2}
\end{figure}

\section{Decentralized Optimal-Cost Minimum Storage Regenerating (OC-MSR) Code  Construction}\label{sec:code-construction}

In this section, we illustrate how to construct the regenerating code corresponding the optimal cost subgraph in Section \ref{DectralizedOpt}. In \cite{Kong01},\cite{Dimk02}, decentralized code for distributing data among storage nodes have been suggested based on rateless fountain code and linear network coding. In the repair problem, an optimum bandwidth code has been suggested by Wu \cite{Wu01}. Subsequently,  the author in \cite{Wu01} finds the the sufficient finite field size for the linear code. We try to find the optimum-cost minimum storage regenerating (MSR) code. Here
MSR means that $\alpha = M/k$. Consequently we find the required finite field size for the linear code  which regenerates the new node having RCP property (with high probability).


To formulate the problem, suppose there is a source file of
size $M$ which is divided into $k(n-k)$ fragments and coded with a
regenerating code (satisfying the RCP) to $n(n-k)$ fragments. The
code blocks are distributed among $n$ nodes
$(\underline{Q}_1,\underline{Q}_2,\cdots,\underline{Q}_n)$. Every
node stores $\alpha= M/k=(n-k)$ fragments with the code
$(\underline{Q}_i=[\underline{q}_i^1, \underline{q}_i^2, \cdots,
\underline{q}_i^{(n-k)}])$ where $\underline{q}_i^j \in
\mathbb{F}_q^M$. When a node fails (say, $\underline{Q}_1$ fails)
the optimization algorithm finds the minimum-cost subgraph. Using
random network coding from a proper finite field guarantees the
regeneration of the new node ($\underline{Q}_1^{'}$) satisfying
the RCP. As proof, we have Lemma \ref{Lemma:code_exist} as
follows.

\begin{lem} \label{Lemma:code_exist} In the repair process of node $1$ described by
optimization problem (\ref{opt-lin}), for any selection of $k-1$
surviving nodes
($\underline{Q}_{s_1},\cdots,\underline{Q}_{s_{k-1}}$), there
exist code coefficients in which  matrix
$[\underline{Q}_1^{'},\underline{Q}_{s_1},\cdots,\underline{Q}_{s_{k-1}}]$
has full rank. That is,
\begin{equation}
\prod_{{s_1,\cdots,s_{k-1}}\subseteq {2,\cdots,n}}
\det([\underline{Q}_1^{'},\underline{Q}_{s_1},\cdots,\underline{Q}_{s_{k-1}}])\neq 0.
\end{equation} \end{lem}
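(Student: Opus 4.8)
The plan is to treat every entry of the regenerated matrix $\underline{Q}_1^{'}$ as a polynomial in the random local coding coefficients that are applied along the minimum-cost repair subgraph of problem (\ref{opt-lin}). With that viewpoint, each factor $\det([\underline{Q}_1^{'},\underline{Q}_{s_1},\cdots,\underline{Q}_{s_{k-1}}])$ is itself a polynomial in those coefficients over $\mathbb{F}_q$ (the $\underline{Q}_{s_i}$ being already fixed by the existing code on the surviving nodes). The whole statement then reduces to showing that the product
\begin{equation}
P \triangleq \prod_{\{s_1,\cdots,s_{k-1}\}\subseteq\{2,\cdots,n\}} \det([\underline{Q}_1^{'},\underline{Q}_{s_1},\cdots,\underline{Q}_{s_{k-1}}])
\end{equation}
is not the zero polynomial, after which a concrete nonzero assignment follows from the Schwartz--Zippel lemma together with a bound on the field size $q$.

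First I would fix one target subset $\{s_1,\cdots,s_{k-1}\}$ and argue that its determinant, viewed as a polynomial in the repair coefficients, is not identically zero. For this I would attach a data collector $DC$ to the modified information flow graph connecting it to the regenerated out-node and to the $k-1$ chosen surviving out-nodes. The cut constraints enforced by problem (\ref{opt-lin}) guarantee that every cut separating the source from $DC$ has value at least $M$, equivalently that the max-flow from the source to $DC$ is at least $M$. By the achievability of single-source multicast by linear network coding (as invoked in \cite{Wu01}, \cite{Ho01}, \cite{Dimk01}), there is then at least one choice of repair coefficients for which the $M$ symbols available at $DC$ are linearly independent, i.e. for which that determinant is nonzero; hence the corresponding polynomial is not identically zero. \textbf{This step is the main obstacle}, since it is exactly where the cut constraints of the optimization must be converted into the existence of a feasible code, and one has to check that the new node receives enough independent information to complement the already-fixed codes of the chosen surviving nodes (using that the existing code satisfies the RCP, so any $k-1$ surviving nodes already contribute $(k-1)(n-k)$ independent dimensions, leaving exactly the $(n-k)$ dimensions the new node must supply).

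Once each factor is a nonzero polynomial, the product $P$ is also nonzero because $\mathbb{F}_q[\cdot]$ is an integral domain. Finally I would invoke the Schwartz--Zippel lemma: if the coding coefficients are drawn uniformly at random from $\mathbb{F}_q$ and $d$ denotes the total degree of $P$, then $P$ vanishes with probability at most $d/q$. Here $d$ is at most the number of data collectors, $\binom{n-1}{k-1}$, times the degree contributed by a single determinant, which is controlled by the network parameters (the number of edges, or the path lengths, in the repair subgraph). Choosing $q>d$ drives the vanishing probability strictly below $1$, so a coefficient assignment with $P\neq 0$ exists; this proves the lemma and simultaneously yields the sufficient finite field size. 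I would close by noting that the same bound shows a uniformly random code succeeds with probability at least $1-d/q$, which matches the \emph{with high probability} statement made just before the lemma.
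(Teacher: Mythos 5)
The paper gives you nothing to compare against here: its ``proof'' of Lemma~\ref{Lemma:code_exist} is a single sentence saying the proof is skipped for space. So your proposal can only be judged on its own merits and against the approach the paper evidently intends, namely the Koetter--M\'{e}dard polynomial framework with a Schwartz--Zippel-type field-size count as in \cite{Wu01}, \cite{KoMed} --- visible in Theorem~\ref{Themorem:Fieldsize}, whose bound $d_0=\binom{n}{k}Mn_{nc}$ is exactly ``number of data collectors times maximal determinant degree.'' Your skeleton (each $\det([\underline{Q}_1^{'},\underline{Q}_{s_1},\cdots,\underline{Q}_{s_{k-1}}])$ is a polynomial in the free repair coefficients; show each factor is not identically zero; the product of nonzero polynomials is nonzero; finish with Schwartz--Zippel over a field with $q>d$) is indeed that approach, and the degree count $\binom{n-1}{k-1}$ times the per-determinant degree is consistent with the paper's field-size formula.

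However, the step you yourself flag as ``the main obstacle'' contains a genuine gap, because the tool you invoke there does not apply. Multicast achievability (\cite{Ho01}, \cite{KoMed}, as used in \cite{Wu01}) asserts the existence of coding coefficients when \emph{all} local coefficients --- including the encoding held by the storage nodes --- are design variables. In the repair problem, $\underline{Q}_2,\cdots,\underline{Q}_n$ are \emph{fixed}; only the coefficients on the repair subgraph are free variables of the polynomial. Hence ``min-cut $\geq M$ to the data collector'' does not by itself imply that the determinant, viewed as a polynomial in \emph{only} the repair coefficients, is nonzero: the cut condition is purely graph-theoretic and knows nothing about the fixed codes (if the survivors' codes violated the RCP, the determinant polynomial would be identically zero even though all cuts are satisfied). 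What a correct per-factor argument needs, and what your parenthetical remark does not supply, is two things: (i) the cut constraints of (\ref{opt-lin}) force at least $\alpha$ units of repair traffic originating at helpers \emph{outside} the chosen set $S=\{s_1,\cdots,s_{k-1}\}$ --- this follows from the specific cut that places the out-nodes of $S$ on the data-collector side, so that traffic sourced inside $S$ does not cross it; and (ii) an independence argument for the fresh traffic: working modulo $W=\mathrm{span}[\underline{Q}_{s_1},\cdots,\underline{Q}_{s_{k-1}}]$, the RCP makes every helper $h\notin S$ map onto the quotient $\mathbb{F}_q^M/W$ (here $\dim(\mathbb{F}_q^M/W)=M-(k-1)\alpha=\alpha$ and $[\underline{Q}_S,\underline{Q}_h]$ full rank gives injectivity, hence surjectivity), so one can route the $\alpha$ fresh units and choose combining coefficients exhibiting one assignment with nonzero determinant --- noting also that relaying through nodes of $S$ (SNC) only adds vectors of $W$, which vanish in the quotient. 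Alternatively, one can import the inductive flow-achieving invariant of \cite{Wu01}, which is stronger than the RCP and is what makes the argument survive arbitrarily many repair stages as claimed in Theorem~\ref{Themorem:Fieldsize}. As written, your proof asserts the crucial non-vanishing via an inapplicable theorem rather than proving it.
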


\begin{proof} Own to space limitation, we skip the proof here.\end{proof}

 In  the optimal-cost repair, surviving nodes are allowed to cooperate (SNC) in order to reduce the cost as in \cite{Maj01}. Using SNC,   network coding
is also used in intermediate storing nodes. The coding process may
increase the degree of new node's polynomial considering the
determinant of coding variables \cite{KoMed}. The maximum degree
of the new node polynomial is determined by the maximum number of
times that a network coding process is used for a specific
fragment. We denote this number as $n_{nc}$. For instance, $n_{nc}=2$ in a scenario that there exist direct links from surviving nodes to new node \cite{Dimk01}, \cite{Wu01}; one step of coding in surviving node and another in new node. And, in general $n_{nc}\geq2$ in multi-hop structure using SNC, since intermediate nodes as well perform network coding on their received fragments. Thus, for a more general
scenario, we have the following result.

\begin{thm} \label{Themorem:Fieldsize} For a distributed storage system with parameters
$G(n,k,\alpha)$, and a source file of size $M$, if the finite
field is greater than $d_0$, there exists a linear network coding
such that at any stage, the RCP is satisfied, regardless of how
many failures/repairs happened before, where
$d_0=\binom{n}{k}Mn_{nc}$.
\end{thm}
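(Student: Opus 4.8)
The plan is to reduce the statement to a Schwartz–Zippel–type argument applied to the product of all relevant $\binom{n}{k}$ reconstruction determinants, treating the network-coding coefficients introduced during repairs as formal variables and bounding the degree of the resulting polynomial. The key observation is that Lemma~\ref{Lemma:code_exist} already guarantees that for a \emph{single} choice of $k-1$ surviving nodes, the relevant matrix can be made full rank; Theorem~\ref{Themorem:Fieldsize} must upgrade this to a \emph{simultaneous} guarantee over all data-collector choices and over an arbitrarily long sequence of failures/repairs, with an explicit field-size threshold $d_0=\binom{n}{k}Mn_{nc}$.

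First I would set up the algebraic framework. View each stored block $\underline{Q}_i$ (and each regenerated $\underline{Q}_1'$) as a vector whose entries are polynomials in the network-coding coefficients used along the repair subgraph. By the discussion preceding the theorem, each time a fragment passes through a coding step its degree increases, and the number of such steps for any one fragment is at most $n_{nc}$; hence every entry of a regenerated block is a polynomial of degree at most $n_{nc}$ in the coding variables, and more generally, after repeated repairs the degree contributed \emph{per reconstructed fragment} stays controlled by $M$ (the number of fragments combined) times $n_{nc}$. For a fixed data collector choosing $k$ out nodes, the reconstruction condition is $\det([\underline{Q}_1',\underline{Q}_{s_1},\cdots,\underline{Q}_{s_{k-1}}])\neq 0$; this determinant is a nonzero polynomial (nonvanishing by Lemma~\ref{Lemma:code_exist}) of degree at most $Mn_{nc}$ in the coding variables.

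Next I would form the product over all $\binom{n}{k}$ choices of $k$ nodes,
\begin{equation}
P=\prod_{\{s_1,\cdots,s_{k-1}\}\subseteq\{2,\cdots,n\}}\det([\underline{Q}_1',\underline{Q}_{s_1},\cdots,\underline{Q}_{s_{k-1}}]),
\end{equation}
which is a nonzero polynomial whose total degree is at most $\binom{n}{k}\cdot Mn_{nc}=d_0$. The RCP holds at the current stage exactly when $P\neq 0$, i.e.\ when the chosen coefficients avoid the zero set of $P$. By the Schwartz--Zippel lemma, if the coefficients are drawn uniformly from $\mathbb{F}_q$ with $q>d_0$, then $P$ evaluates to a nonzero value with positive (indeed high) probability, so a feasible assignment exists; this yields the ``with high probability'' guarantee stated in the theorem and establishes existence for $q>d_0$.

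Finally I would handle the ``regardless of how many failures/repairs happened before'' clause, which I expect to be the main obstacle. The subtlety is that a later repair reuses blocks that are themselves outputs of earlier repairs, so the coding variables compose and the degree could in principle grow with the number of stages. The resolution is to argue that $n_{nc}$ already accounts for the \emph{maximum} number of coding operations applied to any single fragment across the entire history, so the per-fragment degree is uniformly bounded and the degree of $P$ never exceeds $d_0$ regardless of stage. One must verify that introducing fresh, independent coding coefficients at each repair keeps $P$ identically nonzero as a polynomial (not merely nonzero at some past evaluation)---this is where Lemma~\ref{Lemma:code_exist} is invoked at each stage to certify that some specialization keeps each factor nonzero---and that the degree bound is stage-independent. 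Once these two facts are in place, the same Schwartz--Zippel estimate applies uniformly, completing the proof.
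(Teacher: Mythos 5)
Your proposal is correct and takes essentially the same approach as the paper: the paper's proof of Theorem~\ref{Themorem:Fieldsize} simply defers to \cite{Wu01}, whose argument is exactly the Koetter--M\'edard/Schwartz--Zippel argument you reconstruct --- bound each reconstruction determinant's degree by $Mn_{nc}$, multiply over the $\binom{n}{k}$ data-collector choices to get total degree at most $d_0=\binom{n}{k}Mn_{nc}$, and conclude that a field of size greater than $d_0$ admits a nonvanishing assignment. Your closing remark about handling the repair history by per-stage specialization (previously chosen coefficients are fixed field elements, only the current repair's coefficients are formal variables, with Lemma~\ref{Lemma:code_exist} certifying each factor is a nonzero polynomial at every stage) is precisely the inductive device that makes the ``regardless of how many failures/repairs happened before'' clause rigorous in that reference.
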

\begin{proof} The proof is similar to the proof in \cite{Wu01}.\end{proof}
%

With the sufficient field size, the network codes can be easily
constructed by e.g., the random linear network coding approach
\cite{Ho01}. In summary, OC-MSR codes can be given in two steps.
First, the optimal-cost subgraph is found. It is decoupled from
coding. Then, to construct the code of the new node, network
coding coefficients are chosen (e.g., randomly) from a
sufficiently large finite field (specified by Theorem
\ref{Themorem:Fieldsize}) so that the probability of regenerating
the new node satisfying RCP would be close to 1.

\section{Conclusion}\label{Sec:conclusion}
We study a decentralized approach for optimal-cost repair in a
distributed storage system. We formulate the decentralized
optimum-cost problems as a convex optimization problems for the
network with convex transmission costs. Primal and dual
decomposition approaches are used to decouple the problem into
subproblems to be solved locally. We further study the convergence
properties of the algorithms. Numerical results show that for
tandem network, dual decomposition has much faster convergence and
for grid networks, primal decomposition is  faster. Finally, we discuss the construction of the optimal cost regenerating
codes  and discuss the
field size of the codes.

\end{document}